\newcommand{\E}{\mathbb{E}}
\newcommand{\Var}{\mathrm{Var}}
\numberwithin{equation}{section} 
\newtheorem{assumption}{Assumption}
\newtheorem{theorem}{Theorem}
\tikzstyle{c_solid} = [circle, minimum width=1cm, minimum height=1cm,text centered,draw=black, fill=white]
\tikzstyle{c_dashed} = [circle, minimum width=1cm, minimum height=1cm,text centered,draw=black, dashed]
\tikzstyle{arrow} = [thick,->,>={Stealth[scale=1.3]}]
\newcommand{\norm}[1]{\Big\lVert#1\Big\rVert}
\newcommand{\smallnorm}[1]{\lVert#1\rVert}
\newcommand\primitiveinput[1]
\title{Improving the Finite Sample Estimation of Average Treatment Effects using Double/Debiased Machine Learning with Propensity Score Calibration}
\date{\today}
\author{Daniele Ballinari\thanks{We thank
Michael Lechner for comments. The views, opinions, findings, and conclusions or recommendations expressed in this paper are strictly those of the author(s). They do not necessarily reflect the views of the Swiss National Bank (SNB). The SNB takes no responsibility for any errors or omissions in, or for the correctness of, the information contained in this paper.}\thanks{Editing was assisted by ChatGPT and Grammarly.}\\
    Swiss National Bank\\
    Börsenstrasse 15\\
    8001 Zurich, Switzerland\\
    \texttt{daniele.ballinari@snb.ch} \\
\And
Nora Bearth\\
    University of St. Gallen\\
    Rosenbergstrasse 22\\
    9000 St. Gallen, Switzerland\\
    \texttt{nora.bearth@unisg.ch}\\
}
\begin{document}
\maketitle

\begin{abstract}
In the last decade, machine learning techniques have gained popularity for estimating causal effects. One machine learning approach that can be used for estimating an average treatment effect is Double/debiased machine learning (DML) \citep{Chernozhukov:2018}. This approach uses a double-robust score function that relies on the prediction of nuisance functions, such as the propensity score, which is the probability of treatment assignment conditional on covariates. Estimators relying on double-robust score functions are highly sensitive to errors in propensity score predictions. Machine learners increase the severity of this problem as they tend to over- or underestimate these probabilities. Several calibration approaches have been proposed to improve probabilistic forecasts of machine learners. This paper investigates the use of probability calibration approaches within the DML framework. Simulation results demonstrate that calibrating propensity scores may significantly reduces the root mean squared error of DML estimates of the average treatment effect in finite samples. We showcase it in an empirical example and provide conditions under which calibration does not alter the asymptotic properties of the DML estimator.
\end{abstract}

\jelcodes{C01 \and C14 \and C21}
\keywords{Causal inference \and Average treatment effect \and Probability calibration}

\newpage
\section{Introduction}

The increasing amount of observational data has led to a growing interest in causal inference methods that can handle high-dimensional data. As a result, machine learning techniques for estimating causal effects have gained popularity. A well-known example that can be used is the double/debiased machine learning (DML) estimator that relies on the prediction of nuisance functions with machine learning methods \citep*{Chernozhukov:2018}. One of the nuisance functions for estimating an average treatment effect (ATE) is the propensity score, which is the probability of receiving treatment conditional on a set of covariates. While an increasing number of covariates makes the unconfoundedness assumption in observational studies more plausible, propensity score estimates become more extreme as treatment assignment can be better predicted, and the overlap assumption is more difficult to satisfy \citep{DAmour:2021}.\footnote{The unconfoundedness and overlap assumptions are two crucial assumptions in a selection-on-observables setting.}

Extreme propensity scores have been identified to be problematic for estimators relying on double-robust score functions or inverse probability weighting already before the use of machine learners \citep*[e.g.][]{Froelich:2004, Busso:2014}. Methods of re-weighting the propensity scores or trimming methods have been developed and analysed to overcome this problem. For example, \cite*{Crump:2009} propose a rule of thumb of just discarding all propensity scores that are smaller than 0.1 and larger than 0.9. \cite{Yang:2018} provide asymptotic results for a smooth weighting approach approximating sample trimming. \cite*{Huber:2013} propose to trim and re-weight extreme inverse propensity scores. In the context of unbalanced treatment assignment where extreme propensity score estimates are likely, \cite{Ballinari:2024} proposes an estimation procedure for the ATE relaying on undersampling the dataset.


Machine learners aggravate the problem of extreme propensity scores, as their increased flexibility allows for even better label classification. In finite samples, however, researchers have shown that machine learners tend to over- or underestimate the probability of a label, i.e. the treatment probability \citep*[e.g.,][]{Niculescu:2005}. In fact, machine learners are generally designed to perform well in classification tasks and less so in quantifying the likelihood of a class. Deep neural networks, for example, have been found to overestimate probabilities in different settings \citep{Clarte:2023}. Errors in the predicted propensity scores can lead to biased finite sample estimates of the ATE. \cite*{Lechner:2024} have demonstrated through a large simulation study that using DML results in a biased ATE when the selection into treatment is high.

In a prediction setting, calibration methods have been employed to improve the performance of machine learners for classification and regression problems \citep*[e.g.][]{Bella:2010}. Calibration approaches aim to adjust the predicted probabilities closer to the true probabilities. In other words, a well-calibrated binary classifier should assign a predicted probability value (propensity score), such as 0.8, to a group of samples in such a way that roughly 80\% of those samples truly belong to this class (are treated).
Popular approaches to calibrate predicted probabilities are Platt scaling \citep{Platt:1999}, Beta scaling \citep{Kull:2017}, isotonic regression \citep{Zadrozny:2002}, Venn-Abers calibration \citep{Vovk:2012}, temperature scaling \citep{Guo:2017} and expectation consistent calibration \citep{Clarte:2023}.

This paper proposes to use calibration methods to improve the predictions of the propensity scores in the ATE estimation using the DML framework. The DML estimator is adjusted to include an additional estimation step for calibrating the propensity scores. We compare the performance of the ATE estimated with calibrated propensity scores to traditional DML and a re-weighting approach through a simulation study and show under what conditions calibration does not alter the asymptotic properties of DML. Our findings demonstrate that calibrating propensity scores may significantly reduce the root mean squared error (RMSE) of the ATE in finite samples, especially in scenarios with high treatment selectivity. The reduction in RMSE is primarily driven by a reduced bias in the ATE estimate, consequently improving the coverage of the confidence intervals. When there is almost no selectivity, traditional DML performs well, and calibration does not improve the performance but it does not harm the performance either. The higher the selection into treatment, the larger the improvement that can be achieved by calibrating the propensity score. Most of the calibration methods analysed in this paper perform well, with Venn-Abers calibration, Platt scaling, and Beta scaling being the best-performing calibration methods in the simulation study. We find that lower mean squared errors of the machine learner predicting the propensity score are associated with lower bias in the ATE. Practitioners are thus advised to choose the calibration method with the lowest Brier score, which measures the squared difference between the predicted probabilities and the treatment indicator (see Section \ref{subsec:simulation_design}). Last, we showcase the benefits of calibration in an empirical example evaluating the effect of language courses for unemployed individuals on their employment probability. 


The contribution of this paper is twofold. First, we contribute to a small but growing literature that uses calibration methods in the context of treatment effect estimation. \cite*{Gutman:2022} explore using calibrated propensity scores for the inverse probability weighting estimator. \cite*{VanderLaan:2024a} demonstrate the use of isotonic calibration for calibrating inverse propensity weights. Some papers combine calibration methods with the estimation of heterogeneous treatment effects. \cite*{vanderLaan:2023} propose a new nonparametric and doubly-robust method for calibrating predictors of heterogeneous treatment effects, but they do not calibrate the propensity scores. \cite*{Xu:2022} propose a method to assess if heterogeneous treatment effects are well calibrated. \cite{Shachi:2024} analyse the improvements in ATE estimation when using Platt scaling and isotonic regression for propensity scores predicted by a logistic regression. Second, we contribute to the small literature focusing on the finite sample performance of DML. \cite*{Knaus:2021} compare different machine learning estimator for heterogeneous treatment effects in a large simulation study. \cite*{Lechner:2024} show that DML is biased in finite samples when the selection into treatment is high. To the best of our knowledge, this is the first study to investigate the calibration of propensity scores in a DML framework and compare a comprehensive set of calibration methods and machine learners in a simulation study.

The remainder of the paper is structured as follows. Section \ref{sec:notation} defines the notation. Section \ref{sec:calibration_methods} discusses the different calibration methods used. In Section \ref{sec:estimation}, we explain the estimation procedure and present asymptotic results. Section \ref{sec:simulation_study} describes the simulation study and its results. In Section \ref{sec:empirical_application} we use the new procedure in an empirical application and last, Section \ref{sec:conclusion} concludes.

\section{Notation} \label{sec:notation}
We observe a dataset of $N$ i.i.d. observations and denote the index set as $\mathcal{I} = \{1, \dots, N\}$ and the dataset as $\mathcal{S} = \{(D_i, Y_i, X_i) : i \in \mathcal{I}\}$. This paper focuses on the case where $D_i$ is binary, taking values $d \in \{0,1\}$. However, the approach can easily be extended to discrete $D_i$. The random vector $X_i = [X_{i,1}, \dots, X_{i,q}]$ contains $q$ covariates. We denote the conditional expectation of the observed outcome as $\mu(d,x):= \E[Y_i|X_i=x, D_i=d]$ and the propensity score as $p(x) := P(D_i = 1 | X_i = x)$ and will refer to them as \emph{nuisance functions}. We are interested in the estimation of the expected difference in conditional expectations $\theta:=\E[\mu(1,X_i) - \mu(0,X_i)] = \E[\tau(D_i,X_i,Y_i)]$, where $\tau(d,x,y)$ is the efficient score function defined as:
\begin{equation}\label{eq:efficient_score}
    \tau(d,x,y) := \mu(1,x) -\mu(0,x) + \frac{d(y - \mu(1, x))}{p(x)} - \frac{(1-d)(y - \mu(0, x))}{1-p(x)}.
\end{equation}

Under suitable assumptions, the quantity of interest $\theta$ corresponds to the ATE: (i) conditional independence, (ii) common support, (iii) exogeneity of confounders, and (iv) stable unit treatment value assumption \citep[for more details we refer to, among others,][]{Imbens:2009}. The results in this paper do however not dependent on the identification of a treatment effect. Indeed, when these assumptions are not fulfilled, the quantity $\theta$ corresponds to the average prediction effect as described in \cite{Chernozhukov:2018}.

In what follows, we denote the $L_s$-norm as $\smallnorm{\cdot}_s$, for example $\smallnorm{g(X)}_s = \int\vert g(x)^s\vert^{1/s} dP(x)$. A sequence converging to zero is denoted by $o(1)$ and a bounded sequence by $O(1)$. Similarly, a sequence of random variable converging in probability to zero is denoted by $o_p(1)$ and a sequence bounded in probability by $O_p(1)$.

\section{Calibration Methods} \label{sec:calibration_methods}

\subsection{Definition}

We are interested in the estimation of the propensity score defined as $p(x) = P(D_i = 1 | X_i = x)$. Let $\hat p(x)$ denote the predicted probability coming from a statistical/machine learning model fitted on a training dataset $\mathcal{S}_T = \{(D_i, Y_i, X_i) : i \in \mathcal{I}_T\}$, where $\mathcal{I}_T \subset \mathcal{I}$. We would hope that $\hat p(x)$ can be interpreted as the probability of receiving treatment:
\begin{equation}\label{eq:ideal_propensity_score}
    \hat p(x) \approx \E[D_i|X_i = x] = p(x).
\end{equation}
Equation \eqref{eq:ideal_propensity_score} is satisfied if $\hat p(x)$ is (perfectly) calibrated, that is \citep{Gupta:2020}:
\begin{equation}
    \E[D_i\vert \hat p(X_i) = p] = p
\end{equation}
a.s. for all $p$ in the range of $\hat p$, where the expectation is taken over $D_i$ for $i\notin \mathcal{I}_T$.
In words, if the estimated propensity score is 0.8, we would expect that 80\% of the individuals with the same propensity score $p$ are actually treated. In essence, if the model is perfectly calibrated, the predicted probability should match the true probability, ensuring that, on average, the model's predictions align with reality \citep{Wang:2023}. Unfortunately, this is generally not the case without strong distributional assumptions, which may not hold in practice \citep{Gupta:2020}. We follow \cite{vanderLaan:2023} and quantify the calibration of the propensity score by the $L_2$ calibration error:
\begin{equation}\label{eq:calibration_error}
    CAL(\hat p) = \E\big[(\hat p(X_i) - p(X_i))^2 \vert \mathcal{S}_T\big].
\end{equation}

Several approaches have been proposed to improve the calibration of probability predictions \citep{Bella:2010}. A calibrator is a function $f: [0,1] \mapsto [0,1]$ mapping $\hat{p}$ to a predictor $\pi(x) := (f \circ \hat p) (x)$ with presumably improved calibration properties, that is $CAL(\pi) < CAL(\hat p)$. For the true propensity score $p(x)$, the calibrator corresponds to the identity function. In general, however, the calibrator function $f$ is not known and is typically estimated on a so called \emph{calibration dataset} $\mathcal{S}_C = \{(D_i, Y_i, X_i) : i \in \mathcal{I}_C\}$, where $\mathcal{I}_C \subset \mathcal{I}$ and $\mathcal{I}_C \cap \mathcal{I}_T = \emptyset $ (see Section \ref{sec:estimation} for more details). An estimated calibrator is denoted as $\tilde\pi(x)$, to emphasize that it is estimated on a different dataset than the propensity score $\hat p(x)$. In finite samples, perfect calibration is generally not achievable and many empirical calibration techniques restore to asymptotic calibration, where the calibration error vanishes as the sample size grows \citep{Gupta:2020}. In the reminder of this section we give an overview of several popular approaches introduced by the machine learning literature to estimate the calibrator map.

\subsection{Platt Scaling}
In the context of support vector machines, \cite*{Platt:1999} introduced an approach commonly referred to as Platt scaling or Platt calibration, where the calibration map is defined as a logistic transformation:
\begin{align}\label{eq:platt_scaling}
  \pi(x) = \frac{1}{1 + \exp(\beta \cdot \hat p(x) + \alpha)}.
\end{align}
The scalar parameters $\alpha$ and $\beta$ are estimated by maximum likelihood on the calibration dataset $\mathcal{S}_C$. The logistic mapping between the predicted and calibrated probabilities can be derived from the assumption that the propensity scores within each class (treatment) are normally distributed with the same variance \citep{Kull:2017}.\footnote{$P(\hat p(X_i) \leq c| D_i = d, \mathcal{S}_T)$ corresponds to the cumulative normal distribution function with $\mathcal{S}_T$ being the training set.} Due to the parametric assumption, the calibration error might not vanish asymptotically \citep{Gupta:2020}. In particular, the calibration error does not vanish asymptotically if the propensity scores $\hat p(x)$ are already perfectly calibrated since the identity function is not a member of the calibration map defined in Equation \eqref{eq:platt_scaling}. Platt scaling is a simple and computationally efficient method, which is particularly useful when the sample size is small \citep{vanderLaan:2023}.

\subsection{Beta Scaling}
The assumption that propensity scores within each class are normally distributed is unreasonable for many probabilistic classifiers that predict probabilities in the range $[0,1]$. \cite{Kull:2017} propose a more flexible form of Platt's scaling, called Beta scaling. Their approach can be derived by assuming that the predicted probabilities are beta-distributed within each class.\footnote{$P(\hat p(X_i) \leq c| D_i = d, \mathcal{S}_T)$ corresponds to the cumulative beta distribution function with $\mathcal{S}_T$ being the training set.}  The calibration map is defined as:
\begin{align}\label{eq:beta_scaling}
    \pi(x) = \frac{1}{1 + 1/e^\alpha \frac{\hat p(x)^{\beta_0}}{(1-\hat p(x))^{\beta_1}}}
\end{align}
with $\beta_0, \beta_1 \geq 0$. Beta scaling can be shown to be equivalent to a logistic regression on $\log(\hat p(x))$ and $\log(1-\hat p(x))$ \citep{Kull:2017}. We thus estimate the parameters $\alpha, \beta_0, \beta_1$ by maximum likelihood on the calibration dataset $\mathcal{S}_C$, with the constraint $\beta_0, \beta_1 \geq 0$. At the cost of an additional parameter, Beta scaling is more flexible than Platt scaling. In particular, the calibrator map does contain the identity function, parametrized by $\alpha=0, \beta_0=1, \beta_1=1$.

\subsection{Isotonic Calibration}\label{subsec:isotonic}
\cite{Zadrozny:2002} relax the parametric assumptions of Platt and Beta scaling by calibrating probabilities with a nonparametric approach: isotonic regression, also known as monotonic regression \citep*{Barlow:1972}. The idea is to fit a piecewise constant function to the predicted probabilities that is nondecreasing. For a new observation with covariates $X_i=x$ and raw predicted propensity score $\hat p(x)$, the calibrated probability is given by:
\begin{align*}
  \pi(x) = \begin{cases}
 \pi_{(1)} & \text{if } p(x) \leq \hat p_{(1)} \\
 \pi_{(i)} + \frac{\hat p(x) - \hat p_{(i)}}{\hat p_{(i + 1)} - \hat p_{(i)}}(\pi_{(i + 1)} - \pi_{(i)}) & \text{if } \hat p_{(i)} \leq \hat p(x) \leq \hat p_{(i + 1)} \\
 \pi_{(N_C)} & \text{if } \hat p(x) \geq \hat p_{(N_C)}
\end{cases}
\end{align*}
where $\hat p_{(i)}$ is the $i$-th ordered value of the set $\{\hat p(X_i) : i \in \mathcal{I}_C\}$ and $N_C$ is the cardinality of $\mathcal{S}_C$. The parameters of the isotonic regression are estimated over the calibration dataset by the following optimization problem:
\begin{align*}
 &\min_{\pi_{(1)}, \dots, \pi_{(N_C)}} \sum_{i\in\mathcal{I}_C} (D_i - \hat \pi(X_i))^2 \quad \text{s.t.} \; \pi(X_i)\leq \pi(X_j) \; \text{for all $i,j$ where} \; \hat p(X_i) \leq \hat p(X_j)
\end{align*}
which can be computed using the pool adjacent violators algorithm. Isotonic calibration is a popular method in the machine learning literature \citep{Zadrozny:2002, Gupta:2021}. \cite{vanderLaan:2023} show that the calibration error $CAL(\hat \pi(x))$ goes to zero as $N_C$ grows. The nonparametric nature of this calibrator comes at the cost of overfitting when the calibration dataset is small \citep{Caruana:2006}. Moreover, it is crucial to note that calibrated propensity scores using isotonic regression can take values of 0 and 1. In particular, it can be shown that when in the calibration dataset $\mathcal{S}_C$ the observation with the smallest uncalibrated predicted propensity score is not treated, then $\tilde\pi_{(1)}=0$. This leads to significant problems in a DML setting, as the inverse of the propensity score is used to estimate the ATE.

\subsection{Venn-Abers Calibration}
\cite{Vovk:2012} and \cite{Vovk:2015} propose a calibration approach based on Venn predictors. Their approach relies on the nonparametric isotonic regression (see Section \ref{subsec:isotonic}) but overcomes some of its shortcomings (e.g. overfitting). Given a calibration dataset $\mathcal{S}_C$ and a new observation $(D_i, Y_i, X_i)$, we proceed as follows. First, using an isotonic regression fit a calibrator $\pi_0(x)$ on the sample $\{(\hat p(X_i), D_i) : i \in \mathcal{I}_C\} \cup \{(\hat p(X_i), 0)\}$. A second calibrator $\pi_1(x)$ is fitted on the sample $\{(\hat p(X_i), D_i) : i \in \mathcal{I}_C\} \cup \{(\hat p(X_i), 1)\}$. \cite{Vovk:2015} provide guarantees that at least one of $\tilde\pi_0(x)$ and $\tilde\pi_1(x)$ is perfectly calibrated for the new observation. In other words, we fit an isotonic regression on the calibration dataset extended by a new observation. For this new observation, we once assume that it is treated and once that it is not treated. One of the two resulting probabilities will be perfectly calibrated for the new observation.\footnote{For computational efficiency, the isotonic regressions are not recomputed for each new observation $(D_i, Y_i, X_i)$. Instead, the above procedure is repeated on the sample $\{(\hat p(X_i), D_i) : i \in \mathcal{I}_C\} \cup \{(p, d)\}$ for a range of values $p$ and for $d\in\{0,1\}$. We obtain two vectors $\tilde\Pi_0$ and $\tilde\Pi_1$, storing the calibrated probabilities for the range of $p$ values. For a new observation $(D_i, Y_i, X_i)$, we select $\tilde\pi_0(x)$ and $\tilde\pi_1(x)$ from the vectors $\tilde\Pi_0$ and $\tilde\Pi_1$ according to its raw predicted probability $\hat p(x)$. This procedure is implemented in the \texttt{Python} package \texttt{venn-abers 1.4.4}.}

For the estimation of the ATE, we need, however, a single propensity score estimate. Following arguments of log loss minimization, \cite{Vovk:2012} show that a single calibrated probability can be obtained as:
\begin{equation}\label{eq:venn_abers}
 \tilde\pi(x) = \frac{\tilde\pi_1(x)}{1-\tilde\pi_0(x)+\tilde\pi_1(x)}.
\end{equation}
While the calibration guarantees no longer hold for Equation \eqref{eq:venn_abers}, $\tilde\pi(x)$ will be almost perfectly calibrated as long as $\tilde\pi_0(x)$ and $\tilde\pi_1(x)$ are close to each other. This directly follows from the fact that $\tilde\pi_0(x) \leq \tilde\pi(x) \leq \tilde\pi_1(x)$ and that either $\tilde\pi_0(x)$ or $\tilde\pi_1(x)$ are perfectly calibrated. In practice, the empirical findings of \cite{Vovk:2012} and \cite{Vovk:2015} show that Equation \eqref{eq:venn_abers} indeed leads to more calibrated probabilities compared to Platt scaling, isotonic regression or uncalibrated predictions.

\subsection{Temperature Scaling}
\cite{Guo:2017} propose a simple calibration method called temperature scaling. The idea is to rescale the log predicted probabilities by a scalar parameter $T$:
\begin{align}\label{eq:temperature_scaling}
  \pi(x) = \frac{\exp(\log(\hat p(x))/T)}{\exp(\log(\hat p(x))/T) + \exp(\log(1 - \hat p(x))/T)}.
\end{align}
For $T=1$, the calibrated probabilities are equal to the predicted probabilities, as $T\to 0$ the propensity score collapses to a point mass $\pi(x)= 1$ whereas for $T\to \infty$ the propensity score converges to 0.5. Temperature scaling is closely related to Platt and Beta scaling but is more parsimonious as it only has one parameter. \cite{Guo:2017} propose to estimate the parameter $T$ by maximizing the log-likelihood over the calibration dataset $\mathcal{S}_C$. Empirical results show that temperature scaling applied to neural networks is competitive with more complex calibration methods \citep{Guo:2017}.

\subsection{Expectation Consistent Calibration}
Expectation consistent calibration has been introduced by \cite{Clarte:2023} to calibrate predictions of neural networks. The idea is to rescale the probabilities in the same way as done by the temperature scaling approach (see Equation \eqref{eq:temperature_scaling}). Instead of maximizing the log-likelihood, \cite{Clarte:2023} fit the temperature parameter $T$ by enforcing that over the calibration dataset, the average calibrated propensity score coincides with the average proportion of correct treatment ``classifications''. Let $\hat{d}_i = \mathds{1}(\hat p(X_i) \geq 0.5)$ where $\mathds{1}(\cdot)$ is the indicator function, that is, we predict treatment if the propensity score is larger than 0.5. Then $T$ is chosen such that the following equality holds: 
\begin{align*}
 \frac{1}{N_C}\sum_{i \in \mathcal{I}_C} \max\{\tilde \pi(X_i), 1-\tilde \pi(X_i)\} = \frac{1}{N_C}\sum_{i = 1}^N \mathds{1}(\hat d_i = D_i)
\end{align*}
where $N_C$ is again the cardinality of the calibration dataset $\mathcal{S}_C$. The intuition of this approach is that when the machine learning algorithm correctly classifies a certain share of observations (accuracy), its ``confidence'' should, on average, be equal to this share \citep{Clarte:2023}.

\section{Calibrated Double/Debiased Machine Learning Estimator}\label{sec:estimation}

\subsection{Estimation Procedure}
Double/debiased machine learning is an estimation method that enables using machine learning methods for causal inference \citep{Chernozhukov:2018}. One causal estimand of interest that is possible to estimate using DML is the ATE. The idea is to estimate the ATE by using the efficient score function defined in Equation \eqref{eq:efficient_score} first introduced by \cite{Robins:1995}. The efficient score function is Neyman-orthogonal, meaning that small estimation errors in the nuisance parameters do not affect the estimation of the ATE \citep{Chernozhukov:2018}. Therefore, machine learners can be used to estimate the nuisance parameters despite having a small regularization bias. Furthermore, the nuisance parameters are estimated using cross-fitting to avoid overfitting. \cite{Chernozhukov:2018} provide asymptotic theory for the DML estimator, demonstrating its consistency and asymptotic normality, even when machine learners are used to predict the nuisance functions (see Assumption \ref{ass:dml}).

Algorithm \ref{alg:Calibrated-DR-Learner} shows the procedure for estimating an ATE ($\theta$) with the DML approach introduced by \cite{Chernozhukov:2018} directly extended by the calibration procedure.
The algorithm partitions the dataset $\mathcal{S}$ into $K$ folds, trains the nuisance functions on $K-1$ folds, and the resulting models are used to generate predictions for the remaining fold. This evaluation fold is further partitioned into $J$ sub-folds. The calibrator $\pi(x)$ is estimated on $J-1$ sub-folds using one of the approaches discussed in Section \ref{sec:calibration_methods}, while the pseudo-outcomes $\tau_i$ are computed on the remaining sub-fold. This calibration process is repeated $J$ times to compute pseudo-outcomes for each sub-fold. 
The entire procedure is repeated across all $K$ folds, such that each fold is used once as the evaluation/calibration set. The ATE is then estimated by averaging the pseudo-outcomes. Other approaches to split the dataset into training $\mathcal{S}_T = \{(X_i, D_i, Y_i) : i\in \mathcal{I}_T\}$, calibration $\mathcal{S}_C = \{(X_i, D_i, Y_i) : i\in \mathcal{I}_C\}$ and evaluation sets $\mathcal{S}_E = \{(X_i, D_i, Y_i) : i\in \mathcal{I}_E\}$ are possible, as long as the sets are disjoint.


\begin{algorithm}[!h]
    \setstretch{1.35}
\caption{\textsc{Calibrated DML}}\label{alg:Calibrated-DR-Learner}
\SetKwInOut{Input}{Input}
\SetKwInOut{Output}{Output}
\SetKwBlock{Beginn}{begin}{end}
\Input{Dataset $\mathcal{S} = \{(X_i, D_i, Y_i) : i \in \mathcal{I}\}$, number of cross-fitting splits $K$} 
\Output{$ATE$ estimate $\hat{\theta}$, $ATE$ estimate standard error $SE(\hat{\theta})$}
\Beginn{
    partition the index set $\mathcal{I}$ into $K$ disjoint folds $\mathcal{I}^{(1)}, \dots, \mathcal{I}^{(K)}$.\\
\For{k \textnormal{in} $\{1, \dots,K\}$}{
    set $\mathcal{I}_T := \mathcal{I} \setminus \mathcal{I}^{(k)}$\\
    partition the fold $\mathcal{I}^{(k)}$ into $J$ sub-folds $\mathcal{I}^{(k,1)}, \dots, \mathcal{I}^{(k,J)}$\\
            \textsc{Response Functions}: \\
            estimate $\hat \mu(1,x) = \hat \E[Y_i | D_i = 1, X_i = x]$ using $\{(X_i, Y_i) :  i \in \mathcal{I}_T \text{ and } D_i = 1\}$ \\
            estimate $\hat \mu(0,x) = \hat \E[Y_i | D_i = 0, X_i = x]$ using $\{(X_i, Y_i) :  i \in \mathcal{I}_T \text{ and } D_i = 0\}$ \\
            estimate $\hat p(x) = \hat P(D_i = 1 | X_i = x)$ using $\{(X_i, D_i) :  i \in \mathcal{I}_T\}$ \\
            \For {j \textnormal{in} $\{1,\dots, J\}$}{
                set $\mathcal{I}_E := \mathcal{I}^{(k,j)}$\\
                set $\mathcal{I}_C := \mathcal{I}^{(k)}\setminus\mathcal{I}^{(k,j)}$\\
                \textsc{Calibration}: \\
                estimate $\tilde \pi(x) = \tilde{f}(\hat p(x))$ in $\{(X_i, D_i) :  i \in \mathcal{I}_C \}$\\
                \textsc{Pseudo-Outcome}: \\
            compute $\tau_i = \hat{\mu}(1, X_i) - \hat{\mu}(0, X_i) + \frac{D_i(Y_i - \hat{\mu}(1, X_i))}{\tilde{\pi}(X_i)} - \frac{(1-D_i)(Y_i - \hat{\mu}(0, X_i))}{(1-\tilde{\pi}(X_i))}$ for all $i \in \mathcal{I}_E$\\
            }
            }
    \textsc{ATE}: \\
    compute estimate $\hat{\theta} = \frac{1}{N}\sum_{i \in \mathcal{I}} \tau_i$ \\
    compute standard errors $SE(\hat{\theta}) = \sqrt{\frac{1}{N}\sum_{i \in \mathcal{I}} \tau_i^2 - \hat{\theta}^2}$ \\
}
\end{algorithm}

\subsection{Asymptotic Results}

Calibration of propensity scores is expected to improve the finite sample performance of the DML estimator. In the following, we provide assumptions under which a calibrator does not change the asymptotic properties of the estimator. In particular, we show that it has the same asymptotic distribution as the DML estimator. This asymptotic result may serve as a guideline for researchers to choose a suitable calibration method, which not only improves the finite sample performance but also does not deteriorate the asymptotic properties of the estimator.

\begin{assumption}\label{ass:dml}
    Let the realization set $\Xi_N$ be a shrinking neighborhood of the true nuisance functions $\mu_d(x)$ and $p(x)$. Let $\mathcal{X}$ be the support of the random vector $X_i$. Define the rates:
    \begin{align*}
        r_{\mu_d,N}&:=\sup_{\mu_d^* \in \Xi_N} \norm{\mu_d^*(X) - \mu_d(X)}_2\\
        r_{p,N}&:=\sup_{p^* \in \Xi_N} \norm{p^*(X) - p(X)}_2
    \end{align*} 
    The following conditions hold: (i) $\text{Var}[Y_i|D_i=d, X_i=x]<\infty$ for all $x\in\mathcal{X}$ and $d\in\{0,1\}$, (ii) $\epsilon\leq p(x) \leq 1-\epsilon$ and $\epsilon\leq \hat{p}(x) \leq 1-\epsilon$ for $\epsilon>0$ and for all $x\in\mathcal{X}$, (iii) $\hat\mu_d, \hat{p}, \in \Xi_N$ with probability at least $1-o(1)$, (iv) for $w>2$ $\sup_{\mu_d^* \in \Xi_N} \smallnorm{\mu_d^*(X) - \mu_d(X)}_w<\infty$ and $\sup_{p^* \in \Xi_N} \smallnorm{p^*(X) - p(X)}_w<\infty$, (v) $r_{\mu_d,N}=o(1)$ and $r_{p,N}=o(1)$, (vi) $r_{\mu_d,N}\cdot r_{p,N}=o(N^{-1/2})$.
\end{assumption}

Assumption \ref{ass:dml} imposes the usual conditions for the DML estimator to be consistent and asymptotic normally distributed \citep{Chernozhukov:2018}. Converges rates satisfying Assumption \ref{ass:dml}-(vi) have been proven for a variety of machine learning algorithms, e.g. Lasso \citep*{Belloni:2013}, random forests \citep*{Wager:2015}, boosting \citep*{Luo:2016} or neural nets \citep*{Farrell:2021}. In addition, we require the calibrator to fullfil the following assumption.

\begin{assumption}\label{ass:calibrator}
    For $p^* \in \Xi_N$ let $(f \circ p^*)(X_i) = \E[D_i\vert p^* (X_i)]$ and denote by $\tilde{f}$ an estimator of $f$. Denote the realization set by $\Lambda_{N,p^*}$, which is a shrinking neighborhood of the true nuisance function $f$. Define the rate:
    \begin{align*}
        r_{f,N}&:= \sup_{p^* \in \Xi_{N}} \sup_{f^* \in \Lambda_{N,p^*}} \norm{(f^* \circ p^*) (X_i) - (f \circ p^*) (X_i)}_2
    \end{align*} 
    The following conditions hold: (i) $\tilde{f} \in \Lambda_{N,p^*}$ with probability at least $1-o(1)$ for all $p^* \in \Xi_N$, (ii) $\epsilon\leq (\tilde{f}\circ p^*)(x) \leq 1-\epsilon$ for $\epsilon>0$, for all $x\in\mathcal{X}$ and for all $p^* \in \Xi_N$, (iii) $r_{f,N}=o(1)$, (iv) $r_{\mu_d,N} \cdot r_{f,N}=o(N^{-1/2})$, (v) $\sup_{p^* \in \Xi_N} \smallnorm{(f \circ p^*) (X_i) - p(X_i)}_2 \leq C \sup_{p^* \in \Xi_N} \smallnorm{ p^* (X_i) - p(X_i)}_2$ for fixed strictly positive constant $C$.
\end{assumption}

Assumption \ref{ass:calibrator} requires the calibrator to converge to the true expected value of the treatment $D_i$ conditional on the estimated of propensity score $\hat{p}(X_i)$. In particular, Assumption \ref{ass:calibrator}-(iv) requires the calibrator to converge fast enough. For example, when the estimated uncalibrated propensity scores are beta distributed within the treated and nontreated observations, Beta scaling converges at the parametric rate $N^{1/2}$, fulfilling the requirement of Assumption \ref{ass:calibrator}-(iv). Rates for the nonparametric calibrators using isotonic regressions can be found in, e.g. \cite{Zhang:2002}. Notice that the above assumption requires the calibrator to nest the identity function: the true propensity score $p \in \Xi_N$ and Assumption \ref{ass:calibrator} has to hold for all $p^* \in \Xi_N$. In fact, Platt scaling does not fulfill this requirement and the asymptotic result presented below do not hold for this calibrator.  Assumption \ref{ass:calibrator}-(v) requires the calibrator to increase the $L_2$-error of the uncalibrated propensity score estimator by no more than a finite factor $C$. The theoretical result is then given by the following theorem.

\begin{theorem}
    \label{theorem:convergence}
    Under Assumptions \ref{ass:dml} and \ref{ass:calibrator} it holds that:
    \begin{equation*}
        \sqrt{N} \big( \hat{\theta} - \theta \big) \overset{d}{\longrightarrow} \mathcal{N}(0, V)
    \end{equation*}
    where:
    \begin{equation*}
        V = \Var[\mu_1(X_i)-\mu_0(X_i)] + \E\bigg[\frac{\sigma_1^2(X_i)}{p(X_i)}\bigg] + \E\bigg[\frac{\sigma_0^2(X_i)}{1-p(X_i)}\bigg]
    \end{equation*}
    with $\sigma^2_d(X_i) = \Var[Y_i^d\vert X_i]$.
\end{theorem}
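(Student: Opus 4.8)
The plan is to recognize the calibrated estimator of Algorithm \ref{alg:Calibrated-DR-Learner} as an ordinary DML estimator in which the role of the propensity nuisance is played by the calibrated score $\tilde\pi = \tilde f \circ \hat p$ rather than by $\hat p$ itself, and then to verify that $\tilde\pi$ satisfies the overlap and rate conditions that the analysis of \cite{Chernozhukov:2018} requires of a propensity estimate. Because the score $\tau(d,x,y)$ in Equation \eqref{eq:efficient_score} is Neyman orthogonal with respect to perturbations of $(\mu_0,\mu_1,p)$, and this orthogonality is a structural property of the functional form and not of the way the propensity is computed, asymptotic normality with the efficiency bound $V$ will follow from the standard DML theorem once the required conditions are checked for $\tilde\pi$.

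The central quantity is the effective propensity rate $r_{\pi,N} := \sup_{p^*\in\Xi_N}\sup_{f^*\in\Lambda_{N,p^*}} \smallnorm{(f^*\circ p^*)(X_i) - p(X_i)}_2$, which I would control by a triangle inequality,
\[
    \smallnorm{(f^*\circ p^*) - p}_2 \leq \smallnorm{(f^*\circ p^*) - (f\circ p^*)}_2 + \smallnorm{(f\circ p^*) - p}_2 .
\]
The first term is the calibration estimation error, bounded by $r_{f,N}=o(1)$ through Assumption \ref{ass:calibrator}-(iii); the second is the approximation error of the calibration target $(f\circ p^*)(X_i)=\E[D_i\vert p^*(X_i)]$ relative to the true propensity, which Assumption \ref{ass:calibrator}-(v) bounds by $C\,\smallnorm{p^*-p}_2 \leq C\,r_{p,N}$. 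Hence $r_{\pi,N}\leq r_{f,N}+C\,r_{p,N}=o(1)$, so $\tilde\pi$ is consistent for $p$, and by Assumption \ref{ass:calibrator}-(ii) it inherits the overlap bound $\epsilon\leq\tilde\pi\leq 1-\epsilon$ that the doubly robust score needs in its denominators.

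Next I would check the product-of-rates requirement. In the DML expansion of the ATE score the only genuinely second-order remainder is bounded, after the overlap bound turns factors such as $1/(p\,\tilde\pi)$ into constants, by $r_{\mu_d,N}\cdot r_{\pi,N}$; decomposing this using the previous display gives
\[
    r_{\mu_d,N}\, r_{\pi,N} \leq r_{\mu_d,N}\, r_{f,N} + C\, r_{\mu_d,N}\, r_{p,N} = o(N^{-1/2}),
\]
where the first summand is negligible by Assumption \ref{ass:calibrator}-(iv) and the second by Assumption \ref{ass:dml}-(vi). Together with the variance and integrability conditions of Assumption \ref{ass:dml}-(i),(iv), which transfer unchanged because $\tilde\pi$ is uniformly bounded away from $0$ and $1$, all hypotheses of the standard theorem hold with $\tilde\pi$ in place of the propensity nuisance. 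The cross-fitting is handled exactly as in \cite{Chernozhukov:2018}: conditional on the training and calibration folds the map $\tilde\pi$ is fixed, so the stochastic-equicontinuity (empirical process) term on each evaluation sub-fold $\mathcal{I}^{(k,j)}$ is controlled by the usual conditioning argument, and averaging over the $J$ sub-folds and $K$ folds preserves the $\sqrt N$-normalized linearization whose leading term is $\tau(D_i,X_i,Y_i)-\theta$; its variance is precisely $V$.

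The step I expect to be the main obstacle is the nested, moving-target structure of the calibrator. The estimand of the calibration stage is $f\circ\hat p=\E[D_i\vert\hat p(X_i)]$, which is random through $\hat p$ and in general differs from $p(X_i)$, so the suprema over the two realization sets $\Xi_N$ and $\Lambda_{N,p^*}$ must be taken consistently for the conditioning arguments to remain valid. The delicate point is to confirm that the gap $\smallnorm{(f\circ p^*)-p}_2$ enters the asymptotic expansion only through the second-order product term and does not generate an uncontrolled first-order bias; this is exactly what Assumption \ref{ass:calibrator}-(v) guarantees, and verifying that the Neyman-orthogonality cancellation survives the substitution of $\tilde\pi$ for $\hat p$ is the crux of the argument.
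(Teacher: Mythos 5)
Your proposal is correct and follows essentially the same route as the paper's own proof: the same triangle-inequality decomposition of $\smallnorm{(f^*\circ p^*)-p}_2$ into the calibration estimation error (bounded by $r_{f,N}$) and the approximation error (bounded by $C\,r_{p,N}$ via Assumption \ref{ass:calibrator}-(v)), the same verification of the product rate $r_{\mu_d,N}(r_{f,N}+C\,r_{p,N})=o(N^{-1/2})$ from Assumptions \ref{ass:calibrator}-(iv) and \ref{ass:dml}-(vi), and the same final step of treating $\tilde\pi$ as the propensity nuisance and invoking the conditioning/cross-fitting argument of Theorem 5.1 in \cite{Chernozhukov:2018}.
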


The proof of Theorem \ref{theorem:convergence} is relegated to Appendix \ref{app:asymptotic_proofs}. Theorem \ref{theorem:convergence} shows that the DML estimator with calibrated propensity scores has the same asymptotic distribution as the usual DML estimator. In particular, the estimator attains the parametric rate of convergence $\sqrt{N}$ and its variance achieves the semi-parametric efficiency bound \citep{Hahn:1998}. While this asympotic result does not shed light on the finite sample properties, it provides assumptions under which a calibration method does not alter the asymptotic distribution of the ATE estimator.

\section{Simulation Study} \label{sec:simulation_study}
\subsection{Data Generating Process (DGP)}
We generate six different DGPs to test the finite sample performance of the calibration methods. In the spirit of \cite{Nie:2021}, data is generated as follows:
\begin{align*}
 X_i &\sim \text{Unif}(0,1)^{30}, \quad D_i \mid X_i \sim \text{Bernoulli}(p(X_i)), \quad \epsilon_i \sim \mathcal{N}(0,1), \quad
 Y_i = b(X_i) + (D_i - 0.5) \times \frac{X_{i,1} + X_{i,2}}{2} + \epsilon_i
\end{align*}
We vary the difficulty of the baseline main effect and the propensity score function. Let $X_{i,j}$ denote the $j$th entry of the random vector $X_i$. An easy and a difficult baseline main effect $b(X_i)$ are generated as follows:
\begin{align*}
 b(X_i) = \begin{cases}
\begin{array}{ll}
 X_{i,1}\times X_{i,2} + 2 \times (X_{i,3} - 0.5)^2 + X_{i,4} + 0.5X_{i,5} & \text{if easy} \\
 \sin(\pi \times X_{i,1} \times X_{i,2}) + 2 \times (X_{i,3} - 0.5)^2 + X_{i,4} + 0.5X_{i,5} & \text{if difficult}
\end{array}
 \end{cases}
\end{align*}

where the difficult baseline main effect is the scaled \cite{Friedman:1991} function. We consider three different propensity score functions:
\begin{align*}
 p(X_i) = \begin{cases}
\begin{array}{ll}
 \frac{1}{1 + \exp(X_{i,1} - X_{i,2})} & \text{if easy} \\
0.1 + 0.6\beta_{2,4}(\min\{X_{i,1}, X_{i,2}\}) & \text{if difficult} \\
 0.05 + 0.9\beta_{2,4}(\min\{X_{i,1}, X_{i,2}\}) & \text{if difficult and extreme}
\end{array}
 \end{cases}
\end{align*}
where $\beta_{2,4}(x)$ is the beta cumulative distribution function with shape parameters 2 and 4. The first propensity score function has a logistic functional form and should be easy to estimate. Following \cite{Kunzel:2019}, the second and third propensity score functions are highly nonlinear and pose greater challenges for accurate estimation. Furthermore, the third function produces extreme propensity scores.
Figure \ref{fig:propensity_scores} shows the overlap of the propensity scores by plotting kernel densities of the propensity score distributions conditional on $D_i$.\footnote{Figure \ref{fig:propensity_scores} is based on a simulated sample of 2,000 observations. The kernel densities slightly overestimate (underestimate) the maximum (minimum) propensity score.} The first subplot has a nearly perfect overlap between the treated and untreated group, the second one shows already a strong selection into treatment, but the minimum and maximum probability of being treated lie at 10\% and 70\%, respectively. For the extreme propensity scores, the maximum probability of being treated lies at 95\% and the minimum probability at 5\%. Hence, the treatment selection is very strong.

\begin{figure}[h!]
    \centering
    \begin{minipage}{\textwidth}
        \caption{Overlap for the three different propensity scores}
    \label{fig:propensity_scores}
    \includegraphics[width = \textwidth]{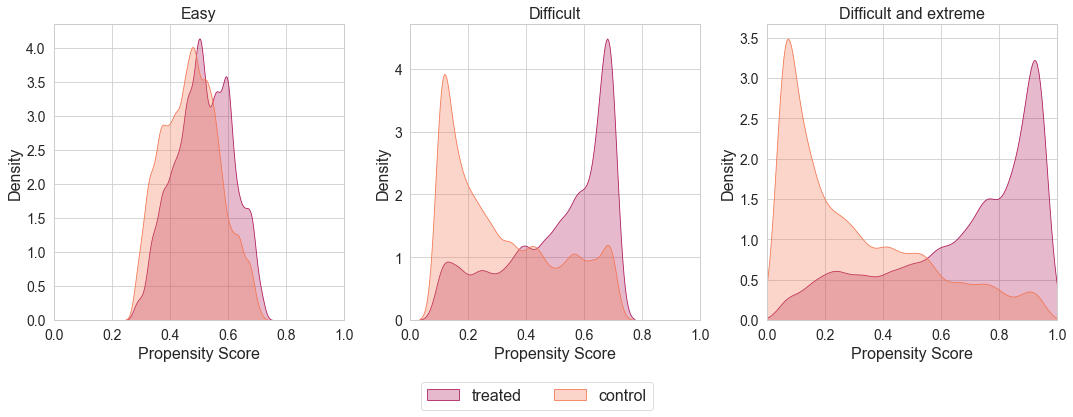}
    \scriptsize \textit{Note:} The figure depicts kernel density estimates of the propensity score distribution conditional on $D_i$ for a random simulated sample of 2,000 observations. The left chart shows the overlap for the easy propensity score, the middle chart for the difficult propensity score and right chart for the difficult and extreme propensity score.
    \end{minipage}
\end{figure}

We consider six different combinations of the baseline main effect and the propensity score function, resulting in six different DGPs as shown in Table \ref{overview_dgps}.\footnote{The Python code for the simulation study is available at \url{https://github.com/dballinari/Improving-the-Finite-Sample-Performance-of-DML-with-Propensity-Score-Calibration}.}

\renewcommand{\arraystretch}{1.1} 

\begin{table}[h!]
    \centering
    \caption{Overview of DGPs}
    \label{overview_dgps}
    \begin{adjustbox}{max width=\textwidth}
        \begin{threeparttable}
            \begin{tabular}{cc|ccc} \toprule
                & & \multicolumn{3}{c}{\textbf{Propensity Score}} \\ 
                & & Easy $p(x)$ & Difficult $p(x)$ & Extreme $p(x)$ \\ \midrule
                \multirow{2}{*}{\textbf{Baseline}} & \multicolumn{1}{c|}{Easy $b(x)$} & DGP 1 & DGP 2 & DGP 5 \\
                & \multicolumn{1}{c|}{Difficult $b(x)$} & DGP 3 & DGP 4 & DGP 6 \\ \bottomrule
            \end{tabular}
            \begin{tablenotes}[flushleft]
                \small
                \item \textit{Note:} This table presents the combinations of baseline effects and propensity scores used in different DGPs.
            \end{tablenotes}
        \end{threeparttable}
    \end{adjustbox}
\end{table}
\renewcommand{\arraystretch}{1}

\subsection{Simulation Design and Estimators}\label{subsec:simulation_design}
For each of the DGPs, we generate $R=1,000$ samples of sizes ($N$) 2,000, 4,000 and 8,000 and compare three different estimators for the nuisance functions, namely a random forest, gradient boosting, and a linear (for the conditional outcomes) and logistic (for the propensity scores) regression with $L_1$ penalty (Lasso). For the linear and logistic regressions, additional features are generated by including polynomials and interactions of order two of the raw covariates $X_i$. For each DGP and sample size, we tune the machine learner of choice on 20 simulation replications, choosing the hyperparameter combination that most frequently yields the best cross-validation performance. Table \ref{grid_search} and table \ref{selected hyperparameters} in Appendix \ref{app:grid_search} shows the grid search and selected hyperparameters used for the different estimators.


The different calibrated versions of the DML estimator are compared with the DML estimator \citep*{Chernozhukov:2018} and the DML estimator with normalized and truncated weights \citep*{Huber:2013}. The calibration methods used are Platt scaling, Beta scaling, isotonic regression, Venn-Abers calibration, temperature scaling and expectation consistent calibration. The performance of the estimators is evaluated using the root mean squared error (RMSE), the bias, the standard deviation and the coverage of the 95\% confidence interval of the ATE estimate $\hat\theta$. The performance of the propensity score estimation is evaluated using the Brier score \citep{Brier:1950} defined as:
\begin{equation}\label{eq:brier_score}
 \frac{1}{N}\sum_{i = 1}^N (\hat p(X_i) - D_i)^2
\end{equation}
and for calibrated propensity scores we replace $\hat p(X_i)$ with $\tilde{\pi}(X_i)$. Ideally we would measure the mean squared difference between the true propensity score and the estimated/calibrated propensity score. However, in empirical applications the true propensity score is not observed. The Brier score replaces the true unknown propensity score with the observed treatment indicator. At the population level, $p(x)$ minimizes the Brier score.\footnote{The solution to the optimization problem $\min_g \E[(g(X)-D)^2]$ is $g(X) = E[D|X] = p(X)$.} This score is a popular measure for assessing the performance of probability forecasts \citep[e.g.][]{Dal:2015}. Moreover, Equation \eqref{eq:brier_score} is a natural candidate to evaluate the performance of the propensity score estimation, as the theoretical results of the DML framework rely on the mean squared convergence of the propensity score estimate \citep{Chernozhukov:2018}.

\subsection{Results} 

Table \ref{Results_simulation_2000} shows the results for samples of size $N = 2,000$ for the different calibration methods, DGPs and machine learners. The results for the other sample sizes can be found in the Online Appendix \ref{app:additional_results}. By comparing the different DGPs, we see that DML performs well if the propensity score has a nearly perfect overlap and is easy to estimate (DGP 1 and 3). In this situation, using calibration methods does not improve the results but does also not worsen them. Moreover, all three machine learners work in these cases, as a linear model can easily approximate the data.

When the propensity score is challenging to estimate (DGP 2 and 4), the RMSE of the DML estimator increases, primarily due to a higher bias. Calibration methods can considerably improve the ATE estimation in terms of RMSE, with Venn-Abers calibration, Platt, or Beta scaling showing the best performance. As long as the outcome regression is easy to estimate and linear (DGP 2), all estimators perform well, with Lasso being the most effective. This underscores the practical relevance of the double-robustness property of the DML estimator: despite the difficulty in estimating the propensity score, the simple baseline main effect can be well estimated by a linear model. 

Lasso performs significantly worse than the other two machine learners for highly nonlinear baseline main effects (DGP 4). Calibration methods can greatly improve the results, reducing the RMSE for all three machine learners. In particular, for Lasso Venn-Abers calibration reduces the RMSE of DML by almost 60\%, reaching nearly the same RMSE as more sophisticated machine learners. Calibration methods reduce the bias in the ATE estimation, which is the main reason for the RMSE improvements. After calibration, the coverage of the confidence interval is close to the nominal level of 95\% for all calibration methods. These patterns become even more noticeable when the propensity scores become more extreme (DGP 5 and 6). Except for isotonic regression, all calibration methods reduce the RMSE of the DML estimator by at least 50\% when the propensity scores are extreme and the outcome regression is difficult to estimate (DGP 6). The reduction stems again from a decrease in the bias.

In summary, calibration of the propensity scores can significantly reduce the RMSE of the DML estimator when the propensity scores are difficult to estimate, regardless of the machine learner used. 
The improvements are mainly related to a reduction in the bias of the ATE estimate, while the standard deviation remains relatively stable. The reduction in bias translates into better coverage of the 95\% confidence interval. Reweighed DML improves only marginally the DML estimator's RMSE, indicating that the calibration of the propensity scores does not simply tackle the problem of extremely small or large propensity scores. Venn-Abers calibration, Platt scaling, and Beta scaling perform best among the calibration methods. Isotonic regression performs poorly because some propensity scores are pushed to zero or one, leading to a significant bias. Therefore, we exclude it from the discussion below.

\begin{table}[h!]
    \centering
    \caption{Simulation results for sample size $N = 2,000$}
\label{Results_simulation_2000}
    \begin{adjustbox}{max width=\textwidth}
        \begin{threeparttable}
    \begin{tabular}{l|cccc|cccc|cccc} \toprule
        Approach  &\multicolumn{4}{c}{Random Forest} &\multicolumn{4}{c}{Gradient Boosting} &\multicolumn{4}{c}{Lasso} \\  \midrule
        & RMSE & Bias & St. Dev. & Coverage & RMSE & Bias & St. Dev. & Coverage & RMSE & Bias & St. Dev. & Coverage \\ \midrule
        &\multicolumn{12}{c}{DGP 1 (easy outcome regression / easy propensity scores)} \\  \midrule
        \primitiveinput{N_2000_DGP_1.tex} \midrule
        &\multicolumn{12}{c}{DGP 2 (easy outcome regression / difficult propensity scores)} \\  \midrule
        \primitiveinput{N_2000_DGP_2.tex} \midrule
        &\multicolumn{12}{c}{DGP 3 (difficult outcome regression / easy propensity scores)} \\  \midrule
        \primitiveinput{N_2000_DGP_3.tex} \midrule
        &\multicolumn{12}{c}{DGP 4 (difficult outcome regression / difficult propensity scores)} \\  \midrule
        \primitiveinput{N_2000_DGP_4.tex} \midrule
        &\multicolumn{12}{c}{DGP 5 (easy outcome regression / difficult and extreme propensity scores)} \\  \midrule
        \primitiveinput{N_2000_DGP_6.tex} \midrule
        &\multicolumn{12}{c}{DGP 6 (difficult outcome regression / difficult and extreme propensity scores)} \\  \midrule
        \primitiveinput{N_2000_DGP_5.tex} \midrule
    \end{tabular}
    \begin{tablenotes}[flushleft]
     \item \textit{Note:} The table shows the results for the different calibration methods for the different DGPs with $N = 2000$. The RMSE is the root mean squared error, the bias is the average difference between the true value and the estimated value, the standard deviation is the standard deviation of the estimated values and the coverage is the coverage of the 95\% confidence interval.
    \end{tablenotes}
\end{threeparttable}
\end{adjustbox}
\end{table}

Next, we will focus on the results for DGP 4 for two reasons. First, as it is a challenging DGP, DML exhibits some bias, indicating room for improvement. As seen earlier, if the DGP is easy to estimate, calibration is unnecessary. Second, DGP 4 is still realistic, as we still have overlap, and there are not too extreme values in the propensity scores. Figure \ref{fig:rmse_mode_4} shows the RMSE for DGP 4 for the different machine learners and different sample sizes. For random forest and gradient boosting, the RMSE is of similar magnitude across the different calibration approaches. The difference in RMSE to the DML estimator and the reweighed DML estimator decreases with increasing sample size, meaning that calibration is more important for smaller sample sizes. As discussed earlier, the linear model performs considerably worse than the other two machine learners for the highly nonlinear DGP 4. Nevertheless, using Venn-Abers calibrated propensity scores significantly improves the RMSE of the Lasso estimator, reducing it to a similar level as the other two machine learners. This result holds across all three sample sizes.

\begin{figure}[h!]
    \centering
    \begin{minipage}{\textwidth}
        \caption{RMSE for different sample sizes (DGP 4)}
    \label{fig:rmse_mode_4}
    \includegraphics[width = \textwidth]{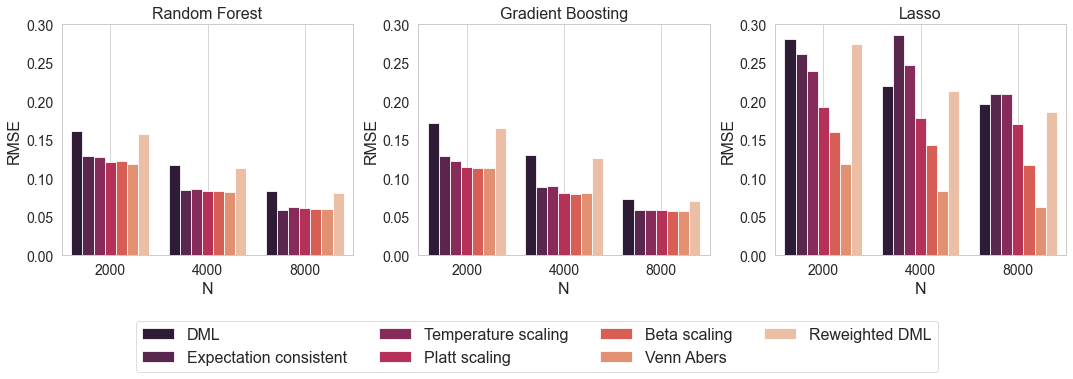}
    \scriptsize \textit{Note:} The figure depicts the RMSE across replications. From left to right, the charts show the results for gradient boosting, random forest and Lasso. The different colors represent the different calibration methods. The charts use 1,000 replications from samples of size 2,000, 4,000 and 8,000 generate from DGP 4 (difficult outcome regression / difficult propensity score).
    \end{minipage}
\end{figure}

The calibration methods improve the RMSE by improving the estimation of the propensity scores. This can be seen from Figure \ref{fig:rmse_bias_relation_mode_4}, which shows for each replication the relation between the Brier score of the propensity score and the bias of the ATE estimate for samples of size 4,000 drawn from DGP 4.\footnote{The same patterns can be seen for the relation between the bias of the ATE estimate and the mean squared difference between true and predicted propensity scores.} For all three machine learners, Platt scaling and Venn-Abers calibration reduces the Brier score of the propensity score. This reduction in the Brier score of the propensity score translates into a reduction of the ATE estimate's bias. However, the reduction in the ATE bias is not of the same proportion as the reduction in the Brier score of the propensity score since prediction errors of the conditional outcome also influence the ATE estimator. This can be seen from the results of the linear model, where despite a large reduction in the Brier score of the propensity score, we still have a considerable variation in the bias of the ATE estimate. In summary, the Brier score could guide researchers in their choice of the calibration method (if any). For example, if we use the Brier score as a criterion for choosing the calibration method in each replication sample generated from DGP 4 ($N=4,000$), the RMSE of estimates based on gradient boosting predictions would have amounted to 0.0016, a 60\% decrease compared to the DML estimator.\footnote{In more detail, for each replication we calibrate the predicted propensity scores using one of the calibration methods presented in Section \ref{sec:calibration_methods} (excluding the isotonic regression). The calibration procedure follows the same cross-fitting approach outlined in Algorithm \ref{alg:Calibrated-DR-Learner}. We then compute the Brier score for each calibration method and choose the calibration method with the lowest Brier score. The ATE estimate for the replication is computed using the propensity scores calibrated with the chosen calibration method.}

\begin{figure}[h!]
    \centering
    \begin{minipage}{\textwidth}
        \caption{Relationship between prediction of propensity score and ATE bias for DGP 4 and $N = 4,000$}
    \label{fig:rmse_bias_relation_mode_4}
    \includegraphics[width = \textwidth]{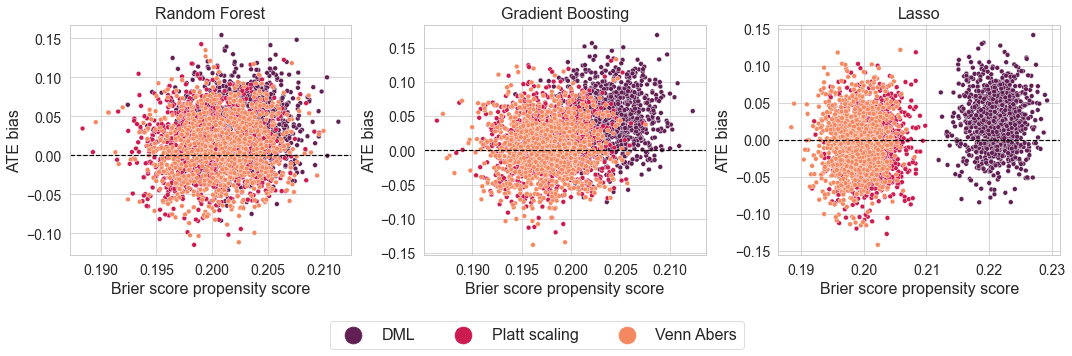}
    \scriptsize \textit{Note:} The figures show the relation between the ATE estimate's bias with the Brier score, defined as $\frac{1}{N}\sum_{i = 1}^N (\hat p(X_i) - D_i)^2$. From left to right, the charts show the results for gradient boosting, random forest and Lasso. The different colours show the different calibration methods. The charts use 1,000 replications from samples of size 4,000 generate from DGP 4 (difficult outcome regression / difficult propensity score).
    \end{minipage}
\end{figure}

\section{Empirical Application} \label{sec:empirical_application}

We compare the different approaches in an empirical setting using the dataset from \cite{Knaus:2020}. We analyse the effect of a language course on the employment status of Swiss unemployed in 2003.\footnote{An extensive dataset description can be found in \cite{Knaus:2022}.} The dataset contains 80,706 observations with 2,412 treated and 78,294 control units and includes information on whether an individual participated in a language course ($d = 1$) or not ($d = 0$). The outcome ($Y_i$) measures the number of months an individual was employed during the first six months following the start of the unemployment spell. The ATE is identified in an unconfoundedness setting, assuming that the potential outcomes are independent of the treatment assignment conditional on the covariates. The covariates ($X_i$) include socio-economic characteristics and labour market history. From the literature, we know that in the short run, active labour market programs have a negative effect on the number of months employed due to a lock-in effect \citep[e.g.][]{Lalive:2008,Knaus:2022}.\footnote{Individuals taking part in a course are less likely to search for a job due to time constraints.} Hence, we expect a negative ATE.

The simulation study showed that calibration methods are particularly relevant in small samples, as they tackle a finite sample problem. The empirical application is thus conducted on random sub-samples of the full dataset to investigate the performance of the different calibration methods in small samples. We draw random sub-samples of 12.5\%, 25\%, and 50\% of the treated and the control units separately.\footnote{The random sub-samples of different sizes are drawn incrementally. For example, to get a 25\% sample, we draw the same observations as in the 12.5\% sample and add 12.5\% more observations to the sample. The same is done for the 50\% sample.}

Table \ref{Results_empirical_application_rf} reports the results for a single draw of the sub-samples using a random forest to predict the nuisance functions. The table reports for each calibration method the ATE, its standard error and the Brier score. On the full sample, the ATE is approximately -0.59 across all calibration methods, indicating that visiting a language course leads to a decrease in employment duration of slightly more than half a month in the first six months after the start of the unemployment spell. The result confirms our previous finding, that for large sample sizes the calibration methods do not affect the results. In the following, we will evaluate the results for the different sub-samples in relation to the full sample estimate. Given the large sample size, the standard errors are relatively small and the ATE estimate is likely close to the true quantity of interest. 

A researcher estimating the ATE using DML on a sub-sample of 12.5\% would obtain an estimate of -0.73, concluding that the language course has a 23\% more negative effect compared to the estimate she would obtain using the full sample. After calibrating the propensity scores using the calibration method with the lowest Brier score, the ATE estimate reduces to -0.614, bringing it considerably closer to the full sample estimate. Thus, when working with a small sub-sample, calibrating propensity scores may yield an estimate closer to that of the full sample. As the size of the random sub-samples increases, the DML estimate converges to the full sample estimate, and the differences between the calibrated and uncalibrated DML estimates decrease. This is consistent with theoretical expectations: with larger sample sizes, DML converges to the true effect, the accuracy of propensity score estimation improves, and the need for calibration diminishes.

\begin{table}[h!]
    \centering
    \caption{Results empirical application: Random Forest}
\label{Results_empirical_application_rf}
    \begin{adjustbox}{max width=\textwidth}
        \begin{threeparttable}
    \begin{tabular}{l|ccc|ccc|ccc|ccc} \toprule
        Approach  &\multicolumn{3}{c}{12.5 \% of sample} &\multicolumn{3}{c}{25 \% of sample} &\multicolumn{3}{c}{50 \% of sample} &\multicolumn{3}{c}{full sample}\\  \midrule
        & ATE & SE & Brier score & ATE & SE & Brier score & ATE& SE  & Brier score & ATE & SE & Brier score \\ \midrule
        \primitiveinput{empirical_one_draw_RandomForest.tex} \midrule
    \end{tabular}
    \begin{tablenotes}[flushleft]
     \item \textit{Note:} The table shows the results for the empirical application for different calibration methods and different sample sizes. ATE is the average treatment effect, SE is the standard error of the ATE and the Brier score is defined as $\frac{1}{N}\sum_{i = 1}^N (\hat p(X_i) - D_i)^2$. The random sub-samples of different sizes are drawn incrementally. For example, to get a 25\% sample, we draw the same observations as in the 12.5\% sample and add 12.5\% more observations to the sample.
    \end{tablenotes}
\end{threeparttable}
\end{adjustbox}
\end{table}

The results reported in Table \ref{Results_empirical_application_rf} are confirmed across different random draws of the sub-samples and different machine learners. In Table \ref{Results_empirical_application_simulation} in the Online Appendix \ref{app:additional_results_empirical}, we report average ATE estimates across ten random draws of the sub-samples for random forest, gradient boosting, and Lasso.

\section{Conclusion} \label{sec:conclusion}

This paper investigates the impact of calibration methods on the performance of the double/debiased machine learning estimator in finite samples through a simulation study. We find that calibration methods can significantly improve the performance of the DML estimator when the propensity scores are difficult to estimate and there is high selection into treatment. The improvements are mainly related to a reduction in the bias of the ATE estimate, while the standard deviation remains relatively stable. Venn-Abers calibration, Platt scaling, and Beta scaling perform best among the calibration methods. Isotonic regression performs poorly because some propensity scores are pushed to zero or one, leading to a significant bias. Thus, the choice of the calibration method is crucial and some methods do not work well in the DML framework. However, if a suitable calibration method is chosen, even if there is no need to calibrate the propensity score, it does not hurt to do so. This is validated by our asymptotic result, providing conditions under which the calibrator does not alter the asymptotic properties of the DML estimator. Practitioners should consider the Brier score as a criterion for choosing the calibration method, and should avoid the use of isotonic regression. These results are confirmed in an empirical application, where we find that calibration methods can be particularly relevant in small samples and do not hurt in larger samples.

Future research could investigate the use of calibration methods on the estimation of other treatment effects, such as heterogeneous treatment effects, or in other causal settings, such as difference-in-difference or instrumental variable settings.

\bibliographystyle{apacite}
\bibliography{library}
\clearpage
\begin{appendices}

\section{Asymptotic Proofs} \label{app:asymptotic_proofs}
\renewcommand{\theequation}{\thesection\arabic{equation}}
\setcounter{equation}{0}
\renewcommand{\thelemma}{\thesection\arabic{lemma}}
\setcounter{lemma}{0}
\renewcommand{\thetable}{\thesection\arabic{table}}
\setcounter{table}{0}

\begin{proof}[Proof of Theorem \ref{theorem:convergence}]
    We will show that $\tilde{f}(\hat{p}(x))$ converges to $p(x)$. Let $\mathcal{E}_N$ denote the event $\mu_d^*, p^* \in \Xi_N$ and $f^* \in \Lambda_{N, p^*}$. Note that, conditionally on the training and calibration samples $\mathcal{S}_T$ and $\mathcal{S}_C$, the nuisance functions are nonstochastic. Than, on the event $\mathcal{E}_N$:
    \begin{align*}
        \E\Big[\big(\tilde{f}(\hat{p}(X_i)) - p(X_i)\big)^2 \Big\vert \mathcal{S}_T, \mathcal{S}_C \Big]^{1/2} &\leq \sup_{p^* \in \Xi_N} \sup_{f^*\in\Lambda_{N,p^*}} \E\Big[\big(f^*(p^*(X_i)) - p(X_i)\big)^2 \Big\vert \mathcal{S}_T, \mathcal{S}_C \Big]^{1/2}\\
        &\leq\sup_{p^* \in \Xi_N} \sup_{f^*\in\Lambda_{N,p^*}} \norm{f^*(p^*(X_i)) - p(X_i)}_2\\
        &\leq \sup_{p^* \in \Xi_N} \sup_{f^*\in\Lambda_{N,p^*}} \norm{f^*(p^*(X_i)) - f(p^*(X_i)) + f(p^*(X_i)) - p(X_i)}_2\\
        &\leq \underbrace{\sup_{p^* \in \Xi_N} \sup_{f^*\in\Lambda_{N,p^*}} \norm{f^*(p^*(X_i)) - f(p^*(X_i))}_2}_{\text{Part I}}\\ &\phantom{\leq}+ \underbrace{\sup_{p^* \in \Xi_N} \norm{f(p^*(X_i)) - p(X_i)}_2}_{\text{Part II}}
    \end{align*}
    where the last equality follows from Minkowski's inequality. Part I corresponds to $r_{f,N}$ by its definition in Assumption \ref{ass:calibrator}. Part II is bounded by $r_{p,N}$ by Assumption \ref{ass:calibrator}-(v). By Lemma 6.1 in \cite{Chernozhukov:2018} and by Assumption \ref{ass:dml} we have $\smallnorm{\tilde{f}(\hat{p}(X_i)) - p(X_i)}_2=o_p(r_{f,N} + r_{p,N})$. In particular, we have that $\smallnorm{\tilde{f}(\hat{p}(X_i)) - p(X_i)}_2 \smallnorm{\hat{\mu}_d(X_i) - \mu_d(X_i)}_2 = o_p(N^{-1/2})$. This concludes the proof as from here it follows the same arguments as the proof of Theorem 5.1 in \cite{Chernozhukov:2018}.
\end{proof}

\section{Hyperparameter Tuning for Different Estimators}\label{app:grid_search}
\renewcommand{\theequation}{\thesection\arabic{equation}}
\setcounter{equation}{0}
\renewcommand{\thelemma}{\thesection\arabic{lemma}}
\setcounter{lemma}{0}
\renewcommand{\thetable}{\thesection\arabic{table}}
\setcounter{table}{0}

\begin{table}[h!]
    \centering
    \caption{Grid search for tuning parameters of different estimators}
\label{grid_search}
\begin{adjustbox}{max width=0.6\textwidth}
    \begin{threeparttable}
\begin{tabular}{ll} \toprule
    \multicolumn{2}{l}{\textbf{Random Forest}} \\ \midrule
    Maximum depth & 1, 2, 3, 5, 10, 20 \\
    Minimum leaf size & 5, 10, 15, 20, 30, 50 \\ \midrule
    \multicolumn{2}{l}{\textbf{Gradient Boosting}} \\ \midrule
    Number of estimators & 5, 10, 25, 50, 100, 200, 500 \\
    Learning rate & 0.001, 0.005, 0.01, 0.05, 0.1, 0.5, 1 \\
    Maximum depth & 1, 2, 3, 5, 10 \\ \midrule
    \multicolumn{2}{l}{\textbf{Lasso}} \\ \midrule
    Regularization parameter & 0.005, 0.01, 0.05, 0.1, 0.5, 0.8, 1 \\ \midrule
\end{tabular}
\begin{tablenotes}[flushleft]
    \small
 \item \textit{Note:} The table shows grid search for the tuning parameters of the different estimators.
\end{tablenotes}
\end{threeparttable}
\end{adjustbox}
\end{table}

\begin{table}[h!]
    \centering
    \caption{Selected hyperparameters for the different DGPs and sample sizes}
\label{selected hyperparameters}
    \begin{adjustbox}{max width=\textwidth}
        \begin{threeparttable}
    \begin{tabular}{ll|ccc|ccc|ccc|ccc|ccc|ccc} \toprule
        
        &  &\multicolumn{6}{c}{Random Forest} &\multicolumn{9}{c}{Gradient Boosting} &\multicolumn{3}{c}{Lasso} \\  \cmidrule(lr){3-8} \cmidrule(lr){9-17} \cmidrule(lr){18-20}
          &  &\multicolumn{3}{c}{max depth} &\multicolumn{3}{c}{min leaf size} &\multicolumn{3}{c}{max depth} &\multicolumn{3}{c}{learning rate} &\multicolumn{3}{c}{nr. estimators} &\multicolumn{3}{c}{regularization strength} \\  \cmidrule(lr){3-8} \cmidrule(lr){9-17} \cmidrule(lr){18-20}
        N & DGP & $p(x)$ & $\mu(1,x)$ & $\mu(0,x)$ & $p(x)$ & $\mu(1,x)$ & $\mu(0,x)$ & $p(x)$ & $\mu(1,x)$ & $\mu(0,x)$ & $p(x)$ & $\mu(1,x)$ & $\mu(0,x)$ & $p(x)$ & $\mu(1,x)$ & $\mu(0,x)$ & $p(x)$ & $\mu(1,x)$ & $\mu(0,x)$ \\ \midrule
        \primitiveinput{hyperparameters.tex} \midrule
    \end{tabular}
    \begin{tablenotes}[flushleft]
     \item \textit{Note:} The table shows the different hyperparameters that have been selected for the different DGPs and sample sizes. Column (3) to (8) show them for the Random Forest, columns (9) to (17) for the Gradient Boosting and columns (18) to (20) for the Lasso.
    \end{tablenotes}
\end{threeparttable}
\end{adjustbox}
\end{table}
\clearpage

\section{Online Appendix} \label{app:online_appendix}
\renewcommand{\theequation}{\thesection\arabic{equation}}
\setcounter{equation}{0}
\renewcommand{\thelemma}{\thesection\arabic{lemma}}
\setcounter{lemma}{0}
\renewcommand{\thetable}{\thesection\arabic{table}}
\setcounter{table}{0}

\subsection{Additional Simulation Results} \label{app:additional_results}

\begin{table}[h!]
    \centering
    \caption{Simulation results for sample size $N = 4,000$}
\label{Results_simulation_4000}
    \begin{adjustbox}{max width=\textwidth}
        \begin{threeparttable}
    \begin{tabular}{l|cccc|cccc|cccc} \toprule
        
        Approach  &\multicolumn{4}{c}{Random Forest} &\multicolumn{4}{c}{Gradient Boosting} &\multicolumn{4}{c}{Lasso} \\  \midrule
        & RMSE & Bias & St. Dev. & Coverage & RMSE & Bias & St. Dev. & Coverage & RMSE & Bias & St. Dev. & Coverage \\ \midrule
        &\multicolumn{12}{c}{DGP 1 (easy outcome regression / easy propensity scores)} \\  \midrule
        \primitiveinput{N_4000_DGP_1.tex} \midrule
        &\multicolumn{12}{c}{DGP 2 (easy outcome regression / difficult propensity scores)} \\  \midrule
        \primitiveinput{N_4000_DGP_2.tex} \midrule
        &\multicolumn{12}{c}{DGP 3 (difficult outcome regression / easy propensity scores)} \\  \midrule
        \primitiveinput{N_4000_DGP_3.tex} \midrule
        &\multicolumn{12}{c}{DGP 4 (difficult outcome regression / difficult propensity scores)} \\  \midrule
        \primitiveinput{N_4000_DGP_4.tex} \midrule
        &\multicolumn{12}{c}{DGP 5 (easy outcome regression / difficult and extreme propensity scores)} \\  \midrule
        \primitiveinput{N_4000_DGP_6.tex} \midrule
        &\multicolumn{12}{c}{DGP 6 (difficult outcome regression / difficult and extreme propensity scores)} \\  \midrule
        \primitiveinput{N_4000_DGP_5.tex} \midrule
    \end{tabular}
    \begin{tablenotes}[flushleft]
     \item \textit{Note:} The table shows the results for the different calibration methods for the different DGPs with $N = 4,000$. The RMSE is the root mean squared error, the bias is the average difference between the true value and the estimated value, the standard deviation is the standard deviation of the estimated values and the coverage is the coverage of the 95\% confidence interval.
    \end{tablenotes}
\end{threeparttable}
\end{adjustbox}
\end{table}

\begin{table}[h!]
    \centering
    \caption{Simulation results for sample size $N = 8,000$}
\label{Results_simulation_8000}
    \begin{adjustbox}{max width=\textwidth}
        \begin{threeparttable}
    \begin{tabular}{l|cccc|cccc|cccc} \toprule
        
        Approach  &\multicolumn{4}{c}{Random Forest} &\multicolumn{4}{c}{Gradient Boosting} &\multicolumn{4}{c}{Lasso} \\  \midrule
        & RMSE & Bias & St. Dev. & Coverage & RMSE & Bias & St. Dev. & Coverage & RMSE & Bias & St. Dev. & Coverage \\ \midrule
        &\multicolumn{12}{c}{DGP 1 (easy outcome regression / easy propensity scores)} \\  \midrule
        \primitiveinput{N_8000_DGP_1.tex} \midrule
        &\multicolumn{12}{c}{DGP 2 (easy outcome regression / difficult propensity scores)} \\  \midrule
        \primitiveinput{N_8000_DGP_2.tex} \midrule
        &\multicolumn{12}{c}{DGP 3 (difficult outcome regression / easy propensity scores)} \\  \midrule
        \primitiveinput{N_8000_DGP_3.tex} \midrule
        &\multicolumn{12}{c}{DGP 4 (difficult outcome regression / difficult propensity scores)} \\  \midrule
        \primitiveinput{N_8000_DGP_4.tex} \midrule
        &\multicolumn{12}{c}{DGP 5 (easy outcome regression / difficult and extreme propensity scores)} \\  \midrule
        \primitiveinput{N_8000_DGP_6.tex} \midrule
        &\multicolumn{12}{c}{DGP 6 (difficult outcome regression / difficult and extreme propensity scores)} \\  \midrule
        \primitiveinput{N_8000_DGP_5.tex} \midrule
    \end{tabular}
    \begin{tablenotes}[flushleft]
     \item \textit{Note:} The table shows the results for the different calibration methods for the different DGPs with $N = 8,000$. The RMSE is the root mean squared error, the bias is the average difference between the true value and the estimated value, the standard deviation is the standard deviation of the estimated values and the coverage is the coverage of the 95\% confidence interval.
    \end{tablenotes}
\end{threeparttable}
\end{adjustbox}
\end{table}
\clearpage
\subsection{Additional Empirical Results} \label{app:additional_results_empirical}

\begin{table}[h!]
    \centering
    \caption{Results for the empirical application across ten random sub-sample replications}
\label{Results_empirical_application_simulation}
    \begin{adjustbox}{max width=\textwidth}
        \begin{threeparttable}
    \begin{tabular}{l|ccc|ccc|ccc|ccc} \toprule
        Approach  &\multicolumn{3}{c}{12.5 \% of sample} &\multicolumn{3}{c}{25 \% of sample} &\multicolumn{3}{c}{50 \% of sample}&\multicolumn{3}{c}{full sample}\\  \midrule
        & Avg. ATE & Std. dev. & Min. Brier & Avg. ATE & Std. dev.& Min. Brier & Avg. ATE & Std. dev. & Min. Brier & ATE & SE & Brier score\\ \midrule
        & \multicolumn{12}{c}{\textbf{Random Forest}} \\ \midrule
        \primitiveinput{empirical_simulation_RandomForest.tex} \midrule
       & \multicolumn{12}{c}{\textbf{Gradient Boosting}} \\ \midrule
        \primitiveinput{empirical_simulation_GradientBoosting.tex} \midrule
        & \multicolumn{12}{c}{\textbf{Lasso}} \\ \midrule
        \primitiveinput{empirical_simulation_Lasso.tex} \midrule
    \end{tabular}
    \begin{tablenotes}[flushleft]
     \item \textit{Note:} The table shows the results for the empirical application for different calibration methods and different sample sizes. For each sample size, despite the full sample, the data has been drawn ten times. Avg. ATE is the average of the ATE over the different draws. Std. dev. is the standard deviation of the ATE over the different draws. Min. Brier show the percentage in which this calibrator has the lowest Brier score. The percentages do not necessarily sum to 100\% as DML and Reweight DML have the same propensity scores and thus the same Brier scores. For the full sample, the ATE is the average treatment effect, SE is the standard error of the ATE.  The Brier score is defined as $\frac{1}{N}\sum_{i = 1}^N (\hat p(X_i) - D_i)^2$.
    \end{tablenotes}
\end{threeparttable}
\end{adjustbox}
\end{table}
\end{appendices}
\end{document}